\begin{document}

\title{Chaotic iterations and topological chaos}
\date{}
\author{Jacques M. Bahi, Christophe Guyeux\\Laboratoire d'Informatique de l'universit\'{e} de Franche-Comt\'{e}, \\90000 Belfort cedex\\T\'{e}l: 03 84 58 77 94; fax: 03 84 58 77 32\\e-mail: jacques.bahi@univ-fcomte.fr, christophe.guyeux@univ-fcomte.fr}
\maketitle

\begin{center}
\textbf{Abstract}
\end{center}

Chaotic iterations have been introduced on the one hand by Chazan,
Miranker~\cite{Chazan69} and Miellou~\cite{Miellou75} in a numerical analysis context, and on the other hand by Robert~\cite{Robert1986} and Pellegrin~\cite{Pellegrin1986} in the discrete dynamical systems framework.
In both cases, the objective was to derive conditions of convergence of
such iterations to a fixed state. In this paper, a new point of view is
presented, the goal here is to derive conditions under which chaotic
iterations admit a chaotic behaviour in a rigorous mathematical sense.
Contrary to what has been studied in the literature, convergence is not
desired.\medskip 

More precisely, we establish in this paper a link between the concept of
chaotic iterations on a finite set and the notion of topological chaos~\cite%
{Li75},~\cite{Dev89},~\cite{Knudsen1994a}. We are motivated by concrete
applications of our approach, such as the use of chaotic boolean iterations
in the computer security field. Indeed, the concept of chaos is used in many
areas of data security without real rigorous theoretical foundations, and
without using the fundamental properties that allow chaos. The wish of this paper is to bring a bit more mathematical rigour in this field.

\section{Introduction}

Let us consider the \emph{system} $\mathds{B}^2 = \{0;1\}^2$, in which each of the
two \emph{cells} $c_i$ is caracterized by a boolean state $e_i$. An \emph{evolution rule} is, for example,

\[
\begin{array}{rccl}
f: & \mathds{B}^2 & \longrightarrow & \mathds{B}^2 \\ 
& (e_1,e_2) & \longmapsto & (e_1+\overline{e_2}, \overline{e_1}) \\ 
&  &  & 
\end{array}%
\]

These cells can be updated in a serial mode (the elements are iterated in a sequential mode, at each time only one
element is iterated), in a parallel mode (at each time, all the elements are
iterated), or by following a sequence $(S^n)_{n \in \mathds{N}}$: the $n^{th}$ term $S^n$ is constituted by the
block components to be updated at the $n^{th}$ iteration. This is the \emph{%
chaotic iterations}, and $S$ is called the \emph{strategy}.
Let us notice that serial and parallel modes are particular cases of chaotic iterations. Until now, only the conditions of convergence have been studied.

\medskip

A priori, the \emph{chaotic} adjective means ``in a disorder
way'', and has nothing to do with the mathematical theory of chaos, studied
by Li-Yorke~\cite{Li75}, Devaney~\cite{Dev89}, Knudsen~\cite{Knudsen1994a}, 
\emph{etc.} We asked ourselves what it really was.

In this paper we study the topological evolution of a system during chaotic
iterations. To do so, chaotic iterations have been written in the field of
discrete dynamical system:

\[
\left\{%
\begin{array}{l}
x^0 \in \mathcal{X} \\ 
x^{n+1} = f(x^n)%
\end{array}
\right. 
\]

\noindent where $(\mathcal{X},d)$ is a metric space (for a distance to be
defined), and $f$ is continuous.

\medskip

Thus, it becomes possible to study the topology of chaotic iterations. More
exactly, the question: ``Are the chaotic iterations a topological chaos ?''
has been raised.%

This study is the first of a series we intend to carry out. We think that
the mathematical framework in which we are placed offers interesting new
tools allowing the conception, the comparison and the evaluation of new
algorithms where disorder, hazard or unpredictability are to be considered.

\bigskip

The rest of the paper is organised as follows.\newline
The first next section is devoted to some recalls on the domain of topological chaos and the domain of discrete chaotic iterations. Third and fourth sections constitute the theoretical study of the present paper.
In section 6, the computer and so the finite set of machine numbers is considered. The paper ends with some discussions and future work.

\section{Basic recalls}

This section is devoted to basic notations and terminologies in
the fields of topological chaos and chaotic iterations.

\subsection{Chaotic iterations}

In the sequel $S^{n}$ denotes the $n^{th}$ term of a sequence $S$, $V_{i}$
denotes the $i^{th}$ component of a vector $V$, and $f^{k}=f\circ ...\circ f$
denotes the $k^{th}$ composition of a function $f$. Finally, the following
notation is used: $\llbracket1;N\rrbracket=\{1,2,\hdots,N\}$.

\medskip Let us consider a \emph{system} of a finite number $\mathsf{N}$ of 
\emph{cells}, so that each cell has a boolean \emph{state}. Then a sequence
of length $\mathsf{N}$ of boolean states of the cells corresponds to a
particular \emph{state of the system}.

A \emph{strategy} corresponds to a sequence $S$ of $\llbracket1;\mathsf{N}%
\rrbracket$. The set of all strategies is denoted by $\mathbb{S}.$

\begin{definition}
Let $S\in \mathbb{S}$. The \emph{shift} function is defined by $\sigma :
(S^n)_{n\in \mathds{N}} \in \mathbb{S} \longrightarrow (S^{n+1})_{n\in %
\mathds{N}} \in \mathbb{S}$, and the \emph{initial function} $i$ is the map
which associates to a sequence, its first term: $i: (S^n)_{n\in \mathds{N}%
}\in \mathbb{S} \longrightarrow S^0 \in \llbracket1;\mathsf{N}\rrbracket $.
\end{definition}

The set $\mathds{B}$ denoting $\{0,1\}$, let $f:\mathds{B}^{\mathsf{N}%
}\longrightarrow \mathds{B}^{\mathsf{N}}$ be a function, and $S\in \mathbb{S}
$ be a strategy. Then, the so called \emph{chaotic iterations} are defined by

\begin{equation}
\left. 
\begin{array}{l}
x^0\in \mathds{B}^{\mathsf{N}}, \\ 
\forall n\in \mathds{N}^{\ast },\forall i\in \llbracket1;\mathsf{N}\rrbracket%
,x_i^n=\left\{ 
\begin{array}{ll}
x_i^{n-1} & \text{ if }S^n\neq i \\ 
\left(f(x^n)\right)_{S^n} & \text{ if }S^n=i.%
\end{array}%
\right.%
\end{array}%
\right.  \label{chaotic iterations}
\end{equation}

In other words, at the $n^{th}$ iteration, only the $S^{n}-$th
cell is \textquotedblleft iterated\textquotedblright . Note that in a
more general formulation, $S^n$ can be a subset of components, and $f(x^{n})_{S^{n}}$ can be replaced by $f(x^{k})_{S^{n}}$, where $k\leqslant n$, modelizing for example delay
transmission (see \emph{e.g.}~\cite{Bahi2000}). For the general definition
of such chaotic iterations, see, e.g.~\cite{Robert1986}.

\subsection{Devaney's chaotic dynamical systems}

Consider a metric space $(\mathcal{X},d)$, and a continuous function $f:%
\mathcal{X}\longrightarrow \mathcal{X}$.

\begin{definition}
$f$ is said to be \emph{topologically transitive} if, for any pair of open
sets $U,V \subset \mathcal{X}$, there exists $k>0$ such that $f^k(U) \cap V
\neq \varnothing$.
\end{definition}

\begin{definition}
$(\mathcal{X},f)$ is said to be \emph{regular} if the set of periodic points
is dense in $\mathcal{X}$.
\end{definition}

\begin{definition}
\label{sensitivity} $f$ has \emph{sensitive dependence on initial conditions}
if there exists $\delta >0$ such that, for any $x\in \mathcal{X}$ and any
neighbourhood $V$ of $x$, there exists $y\in V$ and $n\geqslant 0$ such that $%
|f^{n}(x)-f^{n}(y)|>\delta $.

$\delta$ is called the \emph{constant of sensitivity} of $f$.
\end{definition}

Let us now recall the definition of a chaotic topological system, in the
sense of Devaney~\cite{Dev89}:

\begin{definition}
$f:\mathcal{X}\longrightarrow \mathcal{X}$ is said to be \emph{chaotic} on $%
\mathcal{X}$ if $(\mathcal{X},f)$ is regular, topologically transitive, and
has sensitive dependence on initial conditions.
\end{definition}

\section{A topological approach for chaotic iterations}

In this section we will put our study in a topological context by defining a
suitable metric set.

\subsection{The iteration function and the phase space}

\label{Defining}

Denote by $\delta $ the \emph{discrete boolean metric}, $\delta
(x,y)=0\Leftrightarrow x=y.$ Given a function $f$, \ define the function

\[
\begin{array}{lrll}
F_{f}: & \llbracket1;\mathsf{N}\rrbracket\times \mathds{B}^{\mathsf{N}} & 
\longrightarrow  & \mathds{B}^{\mathsf{N}} \\ 
& (k,E) & \longmapsto  & \left( E_{j}.\delta (k,j)+f(E)_{k}.\overline{\delta
(k,j)}\right) _{j\in \llbracket1;\mathsf{N}\rrbracket},%
\end{array}%
\]%
where + and . are boolean operations.

Consider the phase space: $\mathcal{X}=\llbracket1;\mathsf{N}\rrbracket^{%
\mathds{N}}\times \mathds{B}^{\mathsf{N}}$, which has the cardinality of the
continuum, and the map 
\[
G_{f}\left( S,E\right) =\left( \sigma (S),F_{f}(i(S),E)\right) \label{Gf},
\]
then the chaotic iterations defined in (\ref{chaotic iterations}) can be
described by the following iterations 
\[
\left\{ 
\begin{array}{l}
X^{0}\in \mathcal{X} \\ 
X^{k+1}=G_{f}(X^{k}).%
\end{array}%
\right. 
\]

\subsection{A new distance}

We define a new distance between two points $(S,E),(\check{S},\check{E})\in 
\mathcal{X}$ by $d((S,E);(\check{S},\check{E}))=d_{e}(E,\check{E})+d_{s}(S,%
\check{S})$, where $$\displaystyle{d_{e}(E,\check{E})} = \displaystyle{%
\sum_{k=1}^{\mathsf{N}}\delta (E_{k},\check{E}_{k})} \textrm{ and }\displaystyle{%
d_{s}(S,\check{S})} = \displaystyle{\dfrac{9}{\mathsf{N}}\sum_{k=1}^{\infty }%
\dfrac{|S^k-\check{S}^k|}{10^{k}}}.$$

\subsection{The topological framework}

It can be easily proved that,

\begin{theorem}
\label{continuite} $G_f$ is continuous on $(\mathcal{X},d)$.
\end{theorem}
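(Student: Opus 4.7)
The plan is a direct $\varepsilon$--$\delta$ argument exploiting two special features of the distance $d$. First, $d_e$ is a sum of boolean metrics and therefore takes only integer values in $\{0,1,\ldots,\mathsf{N}\}$, so any ball of radius strictly less than $1$ in $d$ already forces the boolean component to be preserved. Second, $d_s$ is a weighted series with geometric decay, so any sufficiently small distance in $d_s$ forces an initial segment of the two strategies to coincide term by term.

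Concretely, I fix $(S,E)\in\mathcal{X}$ and $\varepsilon>0$ and split the image distance as
$d\bigl(G_f(S,E),G_f(\check S,\check E)\bigr)=d_e\bigl(F_f(i(S),E),F_f(i(\check S),\check E)\bigr)+d_s(\sigma(S),\sigma(\check S))$,
then bound the two summands separately. For the first summand, as soon as $\delta\leq 1$ the condition $d_e(E,\check E)<\delta$ forces $E=\check E$, and as soon as $\delta$ is smaller than the weight $9/(10\mathsf{N})$ attached to the first non-trivial term of $d_s$, the condition $d_s(S,\check S)<\delta$ forces $i(S)=i(\check S)$ (since $|S^k-\check S^k|$ is an integer). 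Together, these two equalities yield $F_f(i(S),E)=F_f(i(\check S),\check E)$ exactly, so the first summand vanishes. For the second summand, reindexing shows $d_s(\sigma(S),\sigma(\check S))\leq 10\cdot d_s(S,\check S)\leq 10\delta$, because the shift drops the top term of the sum and rescales the remaining ones by a factor~$10$.

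Combining the estimates, the choice $\delta=\min\bigl(1,\,9/(10\mathsf{N}),\,\varepsilon/10\bigr)$ finishes the argument. I do not anticipate a conceptual obstacle; the only point requiring care is the book-keeping of the index shift in $d_s$ (so as not to lose the factor $10$ or confuse which term of the series pins down $S^0$), and checking that a single $\delta$ satisfies all three constraints simultaneously.
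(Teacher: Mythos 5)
Your argument is correct and rests on the same three observations as the paper's own proof (integrality of $d_e$, the fact that a small $d_s$ pins down an initial segment of the strategies, and the bookkeeping of the index shift in $d_s$); the paper merely phrases the argument via sequential continuity, while you give a direct $\varepsilon$--$\delta$ bound. Packaging the shift step as the Lipschitz estimate $d_s(\sigma(S),\sigma(\check{S}))\leqslant 10\,d_s(S,\check{S})$ is a clean touch that in fact yields uniform continuity with an explicit modulus, something the paper's version does not make apparent.
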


Then chaotic iterations can be seen as a dynamical system in a topological
space. In the next section, we will show that chaotic iterations are a case
of topological chaos, in the sense of Devaney~\cite{Dev89}.

\section{Discrete chaotic iterations and topological chaos}

To prove that we are in the framework of Devaney's topological chaos, we
have to check the regularity and transitivity conditions.

\subsection{Regularity}

\label{regularite}

\begin{theorem}
Periodic points of $G_{f}$ are dense in $\mathcal{X}$.
\end{theorem}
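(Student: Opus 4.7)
The plan is: given $(S,E) \in \mathcal{X}$ and $\varepsilon > 0$, to produce a periodic point $(\tilde S, \tilde E)$ of $G_f$ at distance less than $\varepsilon$. Because $d_e$ takes values in the finite set $\{0,1,\ldots,\mathsf{N}\}$, for $\varepsilon < 1$ any such approximant must satisfy $\tilde E = E$; so I set $\tilde E := E$ and focus entirely on constructing a periodic strategy $\tilde S$ that is close to $S$ in $d_s$ and makes $(\tilde S,E)$ a periodic orbit of $G_f$.

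Next I would control $d_s$: if $\tilde S^k = S^k$ for $k = 1,\ldots,n$, the geometric tail of $d_s$ gives
\[
d_s(S,\tilde S) \;\leq\; \frac{9}{\mathsf{N}}\sum_{k=n+1}^{\infty}\frac{\mathsf{N}-1}{10^k} \;=\; \frac{\mathsf{N}-1}{\mathsf{N}}\cdot 10^{-n},
\]
so I choose $n$ large enough for the right-hand side to be below $\varepsilon$ and then extend the prefix $S^1,\ldots,S^n$ to a $p$-periodic sequence $\tilde S$ for some $p \geq n$. Since $G_f^p(\tilde S,E) = (\sigma^p(\tilde S), E^p)$, where $E^p \in \mathds{B}^{\mathsf{N}}$ is the state obtained by iterating $F_f$ along $\tilde S^1,\ldots,\tilde S^p$, the point $(\tilde S,E)$ is $p$-periodic if and only if $\tilde S$ is $p$-periodic and $E^p = E$. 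Writing $E_{\star}$ for the state reached after the forced prefix, the task reduces to appending a finite block $\tilde S^{n+1},\ldots,\tilde S^p$ whose iterated $F_f$-action drives $E_{\star}$ back to $E$; declaring $\tilde S$ to be the $p$-periodic repetition of the concatenation then produces the claimed periodic point.

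The crux, and the main obstacle, is the existence of such a returning block. The only structural tool in sight is the finiteness of $\mathds{B}^{\mathsf{N}}$: any infinite sequence of $F_f$-updates visits some state twice and hence eventually enters a cycle, so a cycle is always reachable from $E_{\star}$. The delicate step is to arrange that this cycle passes through $E$ itself (equivalently, that $E$ is $F_f$-reachable from $E_{\star}$), which is what forces the extension $\tilde S^{n+1},\ldots,\tilde S^p$ to be chosen with care rather than taken as the naive periodic repetition of the prefix. Once this reachability step is settled, checking that the resulting $(\tilde S,E)$ is periodic of period $p$ and lies within $\varepsilon$ of $(S,E)$ is immediate from the definition of $G_f$ together with the $d_s$ estimate above.
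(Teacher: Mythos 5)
Your setup --- forcing $\tilde E = E$ because $d_e$ is integer-valued, copying a prefix of $S$ of length $n$ with the tail estimate $d_s(S,\tilde S)\leqslant \frac{\mathsf{N}-1}{\mathsf{N}}10^{-n}$, and reducing everything to the construction of the tail of the strategy --- is exactly the paper's opening, and your observation that $(\tilde S,E)$ is $p$-periodic if and only if $\tilde S$ is $p$-periodic and the state returns to $E$ after $p$ updates is the correct characterisation. The problem is the step you yourself flag as the crux and then leave unresolved: the existence of a finite block of updates driving $E_{\star}$ back to $E$. This is a genuine gap, and it cannot be closed for arbitrary $f$. Take $f$ to be the constant map $x\mapsto (1,\dots,1)$ and $E=(0,\dots,0)$: every application of $F_f$ sets the updated cell to $1$ and never resets any cell to $0$, so once the forced prefix has touched a single cell, $E$ is no longer $F_f$-reachable and no returning block exists. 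The pigeonhole argument on the finite set $\mathds{B}^{\mathsf{N}}$ only guarantees that the forward orbit enters \emph{some} cycle; nothing forces that cycle to pass through the initial state $E$.

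The paper avoids this reachability question entirely by a different tail: after the prefix it appends the constant strategy $1,1,1,\dots$, so that from then on only cell $1$ is ever updated; since that single cell is boolean, after at most two further updates its value either stabilises or oscillates with period dividing two, and the orbit settles into a cycle. Note, however, that this produces a point that is only \emph{eventually} periodic --- the paper itself writes ``periodic (at least for sufficiently large indices)'' --- whereas your construction, where it can be carried out (e.g.\ for the negation $f_0$, for which updating the same cell twice in succession restores its value, so a returning block always exists), would yield a genuinely periodic point. To complete your argument you must either restrict the class of $f$ so that $E$ is always recoverable from $E_{\star}$, or follow the paper and settle for pre-periodic points; as the constant-map example shows, for general $f$ the stronger statement you are aiming at is simply not available.
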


\begin{proof}
Let $(S,E)\in \mathcal{X}$, and $\varepsilon >0$. We are looking for a
periodic point $(S^{\prime },E^{\prime })$ satisfying $d((S,E);(S^{\prime
},E^{\prime }))<\varepsilon$.

We choose $E^{\prime }=E$, and we reproduce enough entries from $S$ to $%
S^{\prime }$ so that the distance between $(S^{\prime },E)$ and $(S,E)$ is
strictly less than $\varepsilon $: a number $k=\lfloor log_{10}(\varepsilon
)\rfloor +1$ of terms is sufficient.\newline
After this $k^{th}$ iterations, the new common state is $\mathcal{E}$, and
strategy $S^{\prime }$ is shifted of $k$ positions: $\sigma ^{k}(S^{\prime })$.\newline
Then we have to complete strategy $S^{\prime }$ in order to make $(E^{\prime
},S^{\prime })$ periodic (at least for sufficiently large indices). To do
so, we put an infinite number of 1 to the strategy $S^{\prime }$.

Then, either the first state is conserved after one iteration, so $\mathcal{E%
}$ is unchanged and we obtain a fixed point. Or the first state is not
conserved, then: if the first state is not conserved after a second
iteration, then we will be again in the first case above (due to the fact that a state is a boolean). Otherwise the first state is conserved, and we have
indeed a fixed (periodic) point.

Thus, there exists a periodic point into every neighbourhood of any point, so 
$(\mathcal{X},G_f)$ is regular, for any map $f$.
\end{proof}

\subsection{Transitivity}

\label{transitivite} Contrary to the regularity, the topological
transitivity condition is not automatically satisfied by any function \newline ($
f=Identity$ is not topologically transitive). Let us denote by $\mathcal{T}$
the set of maps $f$ such that $(\mathcal{X},G_{f})$ is topologically
transitive.

\begin{theorem}
$\mathcal{T}$ is a nonempty set.
\end{theorem}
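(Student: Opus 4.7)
The plan is to exhibit a concrete $f_{0}\in\mathcal{T}$, for which the natural candidate is the componentwise negation $f_{0}(E)=(\overline{E_{1}},\ldots,\overline{E_{\mathsf{N}}})$. The crucial property of this choice is that $F_{f_{0}}(k,E)$ simply flips the $k$-th bit of $E$ (and leaves the other bits untouched), so iterating $F_{f_{0}}$ along a well-chosen finite list of indices lets us move from any given boolean state to any other in at most $\mathsf{N}$ steps. This controllability of the $\mathds{B}^{\mathsf{N}}$-part of the dynamics is really what drives the argument.

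To establish transitivity of $G_{f_{0}}$ it suffices to show that, for any two points $(S,E),(\check{S},\check{E})\in\mathcal{X}$ and any $\varepsilon>0$, one can find $(S',E')\in\mathcal{X}$ with $d((S,E),(S',E'))<\varepsilon$ and some $k>0$ with $d(G_{f_{0}}^{k}(S',E'),(\check{S},\check{E}))<\varepsilon$. I would take $E'=E$ and build $S'$ in three blocks. First, copy the initial $n_{0}$ entries of $S$ into $S'$, where $n_{0}\approx -\log_{10}\varepsilon$ is chosen so that the initial $d_{s}$-distance stays below $\varepsilon$ by the same geometric-series bound used in the regularity proof. Second, iterating $G_{f_{0}}$ for $n_{0}$ steps brings the state to some $\mathcal{E}\in\mathds{B}^{\mathsf{N}}$; I then append to $S'$ the (at most $\mathsf{N}$) indices $i_{1},\ldots,i_{m}$ at which $\mathcal{E}$ and $\check{E}$ disagree, which drives the state to exactly $\check{E}$. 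Third, fill the remaining tail of $S'$ with the entries of $\check{S}$, so that $\sigma^{n_{0}+m}(S')=\check{S}$.

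With this construction $G_{f_{0}}^{\,n_{0}+m}(S',E)$ is exactly $(\check{S},\check{E})$, which trivially lies in the target ball. The only estimates left to verify are the two distance bounds, both of which reduce to standard geometric-series computations using the $9/\mathsf{N}$ normalization in $d_{s}$ and the crude bound $|S^{j}-{S'}^{j}|\leq\mathsf{N}-1$. The main obstacle, such as it is, lies not in the verification but in selecting an $f$ whose induced dynamics on $\mathds{B}^{\mathsf{N}}$ are rich enough that any target state is reachable in finitely many updates; componentwise negation achieves this in the cleanest possible way, since each elementary update becomes an involution on a single coordinate, and the three-block strategy above then patches together without friction to prove $\mathcal{T}\neq\varnothing$.
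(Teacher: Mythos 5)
Your proposal is correct and follows essentially the same route as the paper: it exhibits the componentwise negation $f_{0}$, observes that $F_{f_{0}}(k,\cdot)$ flips exactly the $k$-th bit, and builds the connecting strategy in the same three blocks (copy a prefix of $S$, list the disagreeing coordinates, then follow $\check{S}$). Your variant of making the tail equal to all of $\check{S}$, so the orbit lands exactly on $(\check{S},\check{E})$, is a minor tidying of the paper's argument rather than a different proof.
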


\begin{proof}
We will prove that the vectorial logical negation function $f_{0}$

\begin{equation}
\begin{array}{rccc}
f_{0}: & \mathds{B}^{\mathsf{N}} & \longrightarrow & \mathds{B}^{\mathsf{N}}
\\ 
& (x_{1},\hdots,x_{\mathsf{N}}) & \longmapsto & (\overline{x_{1}},\hdots,%
\overline{x_{\mathsf{N}}}) \\ 
&  &  & 
\end{array}
\label{f0}
\end{equation}%
\noindent is topologically transitive.\newline
Let $\mathcal{B}_A=\mathcal{B}(X_{A},r_{A})$ and $\mathcal{B}_B=\mathcal{B}(X_{B},r_{B})$ be two
open balls of $\mathcal{X}$, where $X_A=(S_A,E_A)$, and $X_B=(S_B,E_B)$. Our goal is to start from a point of $\mathcal{B}_A$ and to arrive, after some iterations of $G_{f_0}$, in $\mathcal{B}_B$.\newline
We have to be close to $X_{A}$, then the starting state $E$ must be $E_{A}$; it
remains to construct the strategy $S$. Let $S^n = S_{A}^n, \forall n \leqslant n_0$, where $n_0$ is chosen in such a way that $(S,E_{A})\in \mathcal{B}_{A}$, and $E'$ be the state of $G_{f_0}^{n_0}(S_{A},E_{A})$.\newline
$E'$ differs from $E_{B}$ by a finite number of cells $c_1,\hdots, c_{n_1}$. Let $S^{n_0+n} = c_n, \forall n \leqslant n_1$. Then the state of $G_{f_0}^{n_0+n_1}(S,E)$ is $E_{B}$.\newline
Last, let $S^{n_0+n_1+n} = S_{B}^n, \forall n \leqslant n_2$, where $n_2$ is chosen in such a way that $G_{f_0}^{n_0+n_1}(S,E)$ is at a distance less than $r_B $ from $(S_{B},E_{B})$. Then, starting from a point $(S,E)$ close to $X_A$, we are close to $X_B$ after $n_0+n_1$ iterations: $(\mathcal{X},G_{f_0})$ is transitive.
\end{proof}

\subsection{Sensitive dependence on initial conditions}

\label{sensibilite}

\begin{theorem}
$(\mathcal{X},G_{f_0})$ has sensitive dependence on initial conditions, and
its constant of sensitiveness is equal to $\mathsf{N}$.
\end{theorem}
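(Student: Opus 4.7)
The plan is to produce, for any point $(S,E) \in \mathcal{X}$ and any $\varepsilon > 0$, a neighbour $(\check S,\check E)$ with $d((S,E),(\check S,\check E)) < \varepsilon$ whose orbit under $G_{f_0}$ eventually lies at distance strictly greater than $\mathsf{N}$ from the orbit of $(S,E)$. Since the state contribution $d_e$ is bounded by $\mathsf{N}$ and the strategy contribution $d_s$ is bounded by the geometric sum $\tfrac{9(\mathsf{N}-1)}{\mathsf{N}} \sum_{k \geqslant 1} 10^{-k} < 1$, achieving total distance $> \mathsf{N}$ will force the witness iteration $n$ to have all $\mathsf{N}$ cells disagreeing (so that $d_e = \mathsf{N}$) together with a nonzero tail contribution $d_s > 0$.

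First I would fix $\check E = E$ so that the initial state contribution vanishes, and choose a threshold $k_0$ large enough that any strategy $\check S$ agreeing with $S$ on positions $0, 1, \ldots, k_0 - 1$ satisfies $d_s(S,\check S) < \varepsilon$. Bounding the tail via a direct geometric estimate, taking $k_0$ of order $\lceil -\log_{10}(\varepsilon) \rceil$ is enough. With this agreement, the two orbits coincide through the first $k_0$ steps, reaching a common state $\mathcal{E}$ at time $k_0$.

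The core of the argument is the design of $\check S^j$ for $j \geqslant k_0$. Because $f_0$ is vectorial negation, the value of cell $i$ at time $k_0 + m$ along a $(T,E)$-orbit is simply $\mathcal{E}_i$ XORed with the parity of the number of occurrences of $i$ in the block $T^{k_0}, \ldots, T^{k_0+m-1}$. Letting $a_i$ denote this count for $T = S$ over a window of length $m$, I would choose $\check S^{k_0}, \ldots, \check S^{k_0+m-1}$ so that each index $i \in \llbracket 1; \mathsf{N} \rrbracket$ appears a number of times of opposite parity to $a_i$. This makes every cell differ at time $n := k_0 + m$, hence $d_e = \mathsf{N}$. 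Finally, since $S$ and $\check S$ necessarily disagree at some position $\geqslant k_0$ (otherwise the parity conditions would fail), the shifted strategies $\sigma^{n}(S)$ and $\sigma^{n}(\check S)$ are not identical and $d_s > 0$ at time $n$, giving $d(G_{f_0}^{n}(S,E), G_{f_0}^{n}(\check S,\check E)) > \mathsf{N}$.

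The main obstacle is realising the prescribed parity profile $(a_i + 1 \bmod 2)_i$ with a block of exactly $m$ entries summing to $m$: this is an elementary counting problem that can be solved by choosing $m$ of suitable parity relative to $\mathsf{N}$ and inserting extra double occurrences of any index to pad the length without changing parities. Once this combinatorial step is in hand, the sensitivity bound follows by stitching together the three distance estimates (initial prefix, state flip, tail disagreement), and the remaining entries of $\check S$ past position $n$ may be chosen arbitrarily.
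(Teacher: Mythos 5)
Your construction follows the same skeleton as the paper's proof (take $\check E=E$, copy a long prefix of $S$, then append a designed block), but you are more careful at the crucial point: the paper simply appends the block $1,2,\ldots,\mathsf{N}$ and asserts that after those $\mathsf{N}$ steps no cell agrees, which ignores the flips that $S$ itself performs over the same window (if $S$ mentions cell $i$ an odd number of times there, that cell ends up agreeing). Your reduction to a parity condition on occurrence counts is the right way to make the mechanism precise. However, the combinatorial step you defer to the end is not an elementary padding problem: summing the required congruences $b_i\equiv a_i+1\pmod 2$ over $i\in\llbracket 1;\mathsf{N}\rrbracket$ gives $m=\sum_i b_i\equiv\sum_i a_i+\mathsf{N}=m+\mathsf{N}\pmod 2$, hence $\mathsf{N}\equiv 0\pmod 2$, independently of the choice of $m$. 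For odd $\mathsf{N}$ no block of any length realises the prescribed parity profile, and inserting double occurrences cannot change this. Worse, the obstruction is not specific to your construction: since $d_e$ is integer-valued, any point within distance $\varepsilon<1$ of $(S,E)$ must satisfy $\check E=E$, and the same parity count then shows that for odd $\mathsf{N}$ the two orbits can never disagree in all $\mathsf{N}$ cells at a common time $n$; as $d_s<(\mathsf{N}-1)/\mathsf{N}<1$, the orbit distance stays strictly below $\mathsf{N}$. So the constant $\mathsf{N}$ is reachable by this route only for even $\mathsf{N}$ (for odd $\mathsf{N}$ your method still gives sensitivity, with constant $\mathsf{N}-1$, by flipping the parity of all but one cell). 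Your sharper analysis in fact exposes a problem that the paper's own argument silently skips over, but as written the proposal papers over it with ``choosing $m$ of suitable parity,'' which does not resolve it.

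A second, smaller gap: to get the strict bound $d\bigl(G_{f_0}^n(S,E),G_{f_0}^n(\check S,E)\bigr)>\mathsf{N}$ you need $d_s(\sigma^n(S),\sigma^n(\check S))>0$, i.e.\ a disagreement at some position $\geqslant n$. The disagreement you exhibit lies inside the block of positions $k_0,\ldots,n-1$ and is therefore erased by the shift $\sigma^n$; if the entries of $\check S$ past position $n$ are ``chosen arbitrarily'' they may coincide with those of $S$, leaving only $d\geqslant\mathsf{N}$. The fix is immediate (impose $\check S^{\,n}\neq S^{\,n}$, possible as soon as $\mathsf{N}\geqslant 2$), but as written the strict inequality is not justified.
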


\begin{Proof}
Let $(S,E) \in \mathcal{X}$, and $\delta>0$. A new point $(S',E')$ is defined by: $E'=E$, $S'^n = S^n, \forall n \leqslant n_0$, where $n_0$ is chosen in such a way that $d((S,E);(S',E'))<\delta$, and $S'^{n_0+k} = k, \forall k \in \llbracket 1; \mathsf{N} \rrbracket$.

\noindent Then the point $(S',E')$ is as close as we want than $(S,E)$, and systems of $G_{f_0}^{k+\mathsf{N}}(S,E)$ and $G_{f_0}^{k+\mathsf{N}}(S',E')$ have no cell presenting the same state: distance between this two points is greater or equal than $\mathsf{N}$.
\end{Proof}

\begin{remark}
This sensitive dependence could be stated as a consequence of regularity and
transitivity (by using the theorem of Banks~\cite{Banks92}). However, we have
preferred proving this result independently of regularity, because the
notion of regularity must be redefined in the context of the finite set of
machine numbers (see section \ref{Concerning}).
\end{remark}

\subsection{Chaos}

In conclusion, if $f\in \mathcal{T}$, then $(\mathcal{X},G_{f})$ is
topologically transitive, regular and has sensitive dependence on initial
conditions. Then we have the result.

\begin{theorem}
$\forall f\in \mathcal{T}\neq \varnothing ,$ $G_{f}$ is a chaotic map on $(%
\mathcal{X},d)$ in the sense of Devaney.
\end{theorem}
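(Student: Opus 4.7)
The plan is to recognize that this theorem is a routine assembly of the four ingredients demanded by Devaney's definition, all of which have essentially been put in place in the preceding subsections. Devaney-chaos for a map on a metric space requires continuity, topological transitivity, regularity (density of periodic points), and sensitive dependence on initial conditions; I would verify each of these in turn for $G_f$ whenever $f\in\mathcal{T}$.

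Continuity is immediate from Theorem~\ref{continuite}, which holds for every $f$. Regularity is the content of the theorem of Subsection~\ref{regularite}, whose proof makes no hypothesis on $f$, so periodic points of $G_f$ are dense in $\mathcal{X}$ in full generality. Topological transitivity is, by definition, the property singling out the members of $\mathcal{T}$, and the fact that $\mathcal{T}\neq\varnothing$ is precisely the content of the preceding transitivity theorem, which exhibited $f_0\in\mathcal{T}$.

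The only mildly delicate point is sensitivity, since the direct calculation in Subsection~\ref{sensibilite} used the specific structure of $f_0$ and does not transfer verbatim to an arbitrary $f\in\mathcal{T}$. To obtain sensitivity uniformly I would invoke Banks' theorem~\cite{Banks92}: any continuous self-map of an infinite metric space that is both topologically transitive and has a dense set of periodic points automatically has sensitive dependence on initial conditions. The space $(\mathcal{X},d)$ has the cardinality of the continuum and is in particular infinite, continuity has been noted, and the remaining two hypotheses were just verified; hence $G_f$ has sensitive dependence on initial conditions whenever $f\in\mathcal{T}$.

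Combining these four facts yields that $G_f$ is Devaney-chaotic for every $f\in\mathcal{T}$, and $\mathcal{T}\neq\varnothing$ since $f_0\in\mathcal{T}$. The main obstacle, such as it is, is purely bookkeeping: one must resist the temptation to cite the sensitivity theorem of Subsection~\ref{sensibilite} directly (it was proved only for $f_0$), and instead route through Banks' theorem, exactly as the remark following that subsection already anticipates.
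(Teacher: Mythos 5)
Your proposal is correct and follows essentially the same route as the paper: the paper's entire proof is the sentence preceding the theorem, which simply assembles continuity, regularity, transitivity and sensitivity from the earlier subsections. The one place where you are actually more careful than the paper is the sensitivity step: the paper's sensitivity theorem is stated and proved only for $f_0$, yet the concluding sentence silently claims sensitive dependence for every $f\in\mathcal{T}$. You correctly notice that this does not transfer verbatim and route through Banks' theorem (continuity, transitivity and dense periodic points on an infinite metric space imply sensitivity), which is exactly the repair that the paper's own remark after Subsection~\ref{sensibilite} anticipates but never formally carries out. So your write-up is, if anything, a tightened version of the paper's argument rather than a departure from it; no gap on your side.
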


We have proven that under the transitivity condition of $f$, chaotic
iterations generated by $f$ can be described by a chaotic map on a
topological space in the sense of Devaney. We have considered a finite set
of states  $\mathds{B}^{\mathsf{N}}$ and a set $\mathbb{S}$ of strategies
composed by an infinite number of infinite sequences. In the following
section we will discuss the impact of these assumptions in the context of the
finite set of machine numbers.

\section{The case of finite strategies}

\label{Concerning}


In the computer science framework, we also have to deal with a finite set of
states of the form $\mathds{B}^{\mathsf{N}}$ and the set $\mathbb{S}$ of
sequences of $\llbracket 1; \mathsf{N} \rrbracket$ is infinite (countable), so in
practice the set $\mathcal{X}$ is also infinite. The only difference with
respect to the theoretical study comes from the fact that the sequences of $%
\mathbb{S}$ are of finite but not fixed length in the practice.\newline
The proof of the continuity, the transitivity and the
sensitivity conditions are independent of the finitude of the length of
strategies (sequences of $\mathbb{S}$), so even in the case of finite machine numbers, we have the two fundamental properties of chaos: sensitivity and transitivity, which respectively implies unpredictability and indecomposability (see~\cite{Dev89}, p.50). The regularity property has no meaning in the case of finite systems because of the notion of periodicity. We propose a new definition  in order to bypass the notion of periodicity in practice.
\medskip

\begin{definition}
A strategy $S=(S^{1},..., S^{L})$ is said \emph{cyclic} if a subset of successive
terms is repeated from a given rank, until the end of $S$. A point of $\mathcal{X}$ that admits a cyclic strategy is called a \emph{cyclic point}.
\end{definition}
For example, $(1,3,2,4,1,2,1,2)$ and $(1,3,2,4,1,2,2,2)$ are cyclic, but $(1,3,2,4,1,2)$ and $(1,3,2,1,3)$ are not cyclic.
This definition can be interpreted as the analogous of periodicity on finite
sets.
Then, following the proof of regularity (section \ref{regularite}), it can be proved that
the set of cyclic points is dense on $\mathcal{X}$, hence obtaining a desired element of regularity in finite sets, as quoted by Devaney (\cite{Dev89}, p.50): two points arbitrary close to each other could have different behaviours, the one could have a cyclic behaviour as long as the system iterates while the trajectory of the second could "visit" the
whole phase space. It should be recalled that the regularity was introduced
by Devaney in order to counteract the transitivity and to obtain such a
property: two points close to each other can have fundamental different behaviours.

\bigskip

It is worthwhile to notice that even if the set of machine numbers is
finite, we deal with strategies that have a finite but unbounded length.
Indeed, it is not necessary to store all the terms of the strategy in
the memory, only the $n^{th}$ term (an integer less than or equal to $N$) of
the strategy has to be stored at the $n^{th}$ step, as it is illustrated in
the following example. Let us suppose that a given text is input from the outside world in the computer character by character, and that the current term of the strategy is given by
the ASCII code of the current stored character. Then, as the set of all possible texts of the  outside world is infinite and the number of their characters is
unbounded, we have to deal with an infinite set of finite but unbounded
strategies. Of course, the preceding example is a simplistic illustrating example.
A chaotic procedure should to be introduced to generate the terms of the
strategy from the stream of characters.\newline
In conclusion, even in the computer science framework our previous theory applies.

\section{Discussion and future work}

We proved that discrete chaotic iterations are a particular case of
Devaney's topological chaos if the iteration function is topologically
transitive, and that the set of topologically transitive functions is non
void. \newline
This theory has a lot of applications, because of the high number of
situations that can be described with the chaotic iterations: neural
networks, cellular automata, multi-processor computing, and more generally
any aggregation of cells (such as pictures, movies, sounds). If this system is requested to evolve in an apparently disorderly manner, \emph{e.g.} for security reasons (encryption, watermarking, pseudo-random number generation, hash functions, \emph{etc.}), our
results could be useful. Moreover, the theory brings another way to compare two given algorithms concerned by
disorder (evaluation of theirs constants of sensitivity, expansivity, \emph{etc.}), which
can be seen as a complement of existing statistical evaluations.\newline
In future work, other forms of chaos (such as Li-York chaos~\cite{Li75})
will be studied, other quantitative and qualitative tools such as
expansivity or entropy (see e.g.~\cite{Adler65} or~\cite{Bowen}) will be
explored, and the domain of applications of our theoretical concepts will be
enlarged.

\bibliographystyle{plain}
\bibliography{mabase.bib}

\newpage

\section{Appendix: Continuity of $G_f$}

\begin{theorem}
$G_f$ is a continuous function for $(\mathcal{X},d)$.
\end{theorem}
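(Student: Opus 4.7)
The plan is to verify the $\varepsilon$--$\eta$ definition of continuity at an arbitrary $(S,E) \in \mathcal{X}$, exploiting the fact that two of the three contributions to the distance are effectively discrete. Specifically, $d_e$ only takes integer values in $\{0,1,\ldots,\mathsf{N}\}$, and the first term of the series defining $d_s$, namely $\tfrac{9}{\mathsf{N}}\cdot\tfrac{|S^1-S'^1|}{10}$, can vanish only when $S^1 = S'^1$. So if I choose $\eta$ small enough, proximity in $d$ will force both $E = E'$ and $i(S) = i(S')$, and all the remaining work is to track how the shift $\sigma$ interacts with the series $d_s$.

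First I would fix $(S,E) \in \mathcal{X}$ and $\varepsilon > 0$, and consider any $(S',E')$ with $d((S,E),(S',E')) < \eta$ for an $\eta$ to be chosen. By imposing $\eta < 1$, the discreteness of $d_e$ gives $d_e(E,E') = 0$, i.e.\ $E = E'$. By imposing $\eta < \tfrac{9}{10\,\mathsf{N}}$, the first term of the series forces $|S^1 - S'^1| < 1$, hence $i(S) = S^1 = S'^1 = i(S')$. Under these two reductions the boolean component of $G_f$ satisfies $F_f(i(S),E) = F_f(i(S'),E')$, so $d_e$ contributes $0$ to $d(G_f(S,E), G_f(S',E'))$.

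Next I would handle the shifted strategies. A direct computation, using $S^1 = S'^1$ to cancel the missing $k=1$ term, yields
\[
d_s(\sigma(S), \sigma(S')) \;=\; \frac{9}{\mathsf{N}}\sum_{k=1}^{\infty}\frac{|S^{k+1}-S'^{k+1}|}{10^{k}} \;=\; 10\,d_s(S,S'),
\]
so the shift is $10$-Lipschitz on the subset where the initial indices coincide. Therefore
\[
d\bigl(G_f(S,E),G_f(S',E')\bigr) \;=\; 0 + d_s(\sigma(S),\sigma(S')) \;\leq\; 10\,\eta.
\]
Choosing $\eta = \min\!\bigl(1,\ \tfrac{9}{10\,\mathsf{N}},\ \tfrac{\varepsilon}{10}\bigr)$ then gives $d(G_f(S,E),G_f(S',E')) < \varepsilon$, as required. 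Since $\eta$ does not depend on $(S,E)$, the map $G_f$ is in fact uniformly continuous.

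I do not expect any genuine obstacle here: the only subtlety worth underlining is that continuity of $G_f$ does \emph{not} require any regularity of $f$ itself (since $f$ acts on the finite set $\mathds{B}^{\mathsf{N}}$, where everything is trivially continuous), and that the $10$-expansion caused by $\sigma$ is harmless precisely because the decay $10^{-k}$ in $d_s$ was chosen with exactly this shift in mind. The only care needed is bookkeeping the three thresholds on $\eta$ and verifying the telescoping identity for $d_s \circ \sigma$.
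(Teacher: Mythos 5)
Your proof is correct, and it reaches the conclusion by a cleaner and more quantitative route than the paper. The paper proves sequential continuity: it takes a sequence $(S^n,E^n)\to(S,E)$, uses the integrality of $d_e$ to freeze the state component, and then argues that once the first $k+2$ terms of the strategies agree, the shifted strategies agree on their first $k+1$ terms, so the distance after applying $G_f$ drops below $10^{-(k+1)}\leqslant\varepsilon$. You instead give a direct $\varepsilon$--$\eta$ argument whose key ingredient is the exact identity $d_s(\sigma(S),\sigma(S'))=10\,d_s(S,S')$ on the set where the leading indices coincide; this replaces the paper's prefix-counting with a one-line Lipschitz estimate and, because your threshold $\eta=\min\bigl(1,\tfrac{9}{10\mathsf{N}},\tfrac{\varepsilon}{10}\bigr)$ is independent of the base point, yields \emph{uniform} continuity, which the paper's argument does not make explicit. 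Both proofs rest on the same two structural facts (discreteness of $d_e$ and of the leading term of $d_s$ forces $E=E'$ and $i(S)=i(S')$, hence equal images under $F_f$; the shift expands $d_s$ by a factor of $10$, which the weight $10^{-k}$ absorbs), so the difference is one of packaging rather than of substance --- but your packaging is the sharper of the two. One cosmetic caveat: the paper indexes strategies from $S^0$ in the definition of $i$ while summing $d_s$ from $k=1$; your convention (first term $=S^1$) is internally consistent and matches the metric, so this is the paper's ambiguity, not a gap in your argument.
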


\begin{proof}
We use the sequential continuity (we are in a metric space).

Let $(S^n,E^n)_{n\in \mathds{N}}$ be a sequence of the phase space $\mathcal{%
X}$, which converges to $(S,E)$. We will prove that $\left(
G_{f}(S^n,E^n)\right) _{n\in \mathds{N}}$ converges to $G_{f}(S,E) $. Let us recall that for all $n$, $S^n$ is a strategy,
thus, we consider a sequence of strategy (\emph{i.e.} a sequence of
sequences).\newline
As $d((S^n,E^n);(S,E))$ converges to 0, each distance $d_{e}(E^n,E)$ and $d_{s}(S^n,S)$ converges to 0. But $d_{e}(E^n,E)$ is an integer, so $\exists n_{0}\in \mathds{N},$ $%
d_{e}(E^n,E)=0$ for any $n\geqslant n_{0}$.\newline
In other words, there exists threshold $n_{0}\in \mathds{N}$ after which no
cell will change its state: 
\[
\exists n_{0}\in \mathds{N},n\geqslant n_{0}\Longrightarrow E^n=E.
\]%
In addition, $d_{s}(S^n,S)\longrightarrow 0,$ so $\exists n_{1}\in \mathds{N}%
,d_{s}(S^n,S)<10^{-1}$ for all indices greater than or equal to $n_{1}$.
This means that for $n\geqslant n_{1}$, all the $S^n$ have the same first
term, which is $S_0$:%
\[
\forall n\geqslant n_{1},S_0^n=S_0.
\]%
Thus, after the $max(n_{0},n_{1})-$th term, states of $E^n$ and $E$ are the
same, and strategies $S^n$ and $S$ start with the same first term.\newline
Consequently, states of $G_{f}(S^n,E^n)$ and $G_{f}(S,E)$ are equal, then
distance $d$ between this two points is strictly less than 1 (after the rank 
$max(n_{0},n_{1})$).\bigskip \newline
\noindent We now prove that the distance between $\left(
G_{f}(S^n,E^n)\right) $ and $\left( G_{f}(S,E)\right) $ is convergent to 0.
Let $\varepsilon >0$. \medskip

\begin{itemize}
\item If $\varepsilon \geqslant 1$, then we have seen that the distance
between $\left( G_{f}(S^n,E^n)\right) $ and $\left( G_{f}(S,E)\right) $ is
strictly less than 1 after the $max(n_{0},n_{1})^{th}$ term (same state).
\medskip

\item If $\varepsilon <1$, then $\exists k\in \mathds{N},10^{-k}\geqslant
\varepsilon \geqslant 10^{-(k+1)}$. But $d_{s}(S^n,S)$ converges to 0, so 
\[
\exists n_{2}\in \mathds{N},\forall n\geqslant
n_{2},d_{s}(S^n,S)<10^{-(k+2)},
\]%
after $n_{2}$, the $k+2$ first terms of $S^n$ and $S$ are equal.
\end{itemize}

\noindent As a consequence, the $k+1$ first entries of the strategies of $%
G_{f}(S^n,E^n)$ and $G_{f}(S,E)$ are the same (because $G_{f}$ is a shift of
strategies), and due to the definition of $d_{s}$, the floating part of the
distance between $(S^n,E^n)$ and $(S,E)$ is strictly less than $%
10^{-(k+1)}\leqslant \varepsilon $.\bigskip \newline
In conclusion, $G_{f}$ is continuous,%
\[
\forall \varepsilon >0,\exists N_{0}=max(n_{0},n_{1},n_{2})\in \mathds{N}%
,\forall n\geqslant N_{0},d\left( G_{f}(S^n,E^n);G_{f}(S,E)\right) \leqslant
\varepsilon .
\]
\end{proof}

\end{document}